\theoremstyle{plain}
\newtheorem{prop}{Proposition}
\theoremstyle{definition}
\theoremstyle{remark}
\newcommand{\prob}{\mathsf{P}} 
\newcommand{\E}{\mathsf{E}}
\newcommand{\probset}{\mathbb{P}}
\newcommand{\bin}{{\sf Bin}}
\newcommand{\YY}{\mathbb{Y}}
\def\FTE{\mathop\mathit{ FiveThirtyEight}\nolimits}
\newcommand{\clinton}{\text{\sc c}}
\newcommand{\trump}{\text{\sc t}}
\newcommand{\upi}{\overline{\pi}}
\newcommand{\lpi}{\underline{\pi}}
\renewcommand{\phi}{\varphi}
\title{Rethinking probabilistic prediction in the wake of the 2016 U.S.\ presidential election}
\author{
Harry Crane\footnote{Department of Statistics \& Biostatistics, Rutgers, the State University of New Jersey, {\tt hcrane@stat.rutgers.edu}} 
\quad and \quad 
Ryan Martin\footnote{Department of Statistics, North Carolina State University, {\tt rgmarti3@ncsu.edu}}
}
\date{\today}
\begin{document}

\maketitle 

\begin{abstract}   
To many statisticians and citizens, the outcome of the most recent U.S.\ presidential election represents a failure of data-driven methods on the grandest scale.  This impression has led to much debate and discussion about how the election predictions went awry---Were the polls inaccurate? Were the models wrong? Did we misinterpret the probabilities?---and how they went right---Perhaps the analyses were correct even though the predictions were wrong, that's just the nature of probabilistic forecasting. With this in mind, we analyze the election outcome with respect to a core set of {\em effectiveness principles}.  Regardless of whether and how the election predictions were right or wrong, we argue that they were ineffective in conveying the extent to which the data was informative of the outcome and the level of uncertainty in making these assessments.  Among other things, our analysis sheds light on the shortcomings of the classical interpretations of probability and its communication to consumers in the form of predictions.  We present here an alternative approach, based on a notion of {\em validity}, which offers two immediate insights for predictive inference.  First, the predictions are more conservative, arguably more realistic, and come with certain guarantees on the probability of an erroneous prediction.  Second, our approach easily and naturally reflects the (possibly substantial) uncertainty about the model by outputting {\em plausibilities} instead of {\em probabilities}.  Had these simple steps been taken by the popular prediction outlets, 
the election outcome may not have been so shocking.

\smallskip

\emph{Keywords and phrases:}  Interpretation of probability; plausibility; prediction; statistical modeling; validity.
\end{abstract}

\section{Introduction}

On the morning of November 8, 2016, the day of the United States presidential election, Nate Silver's $\FTE$ website estimated Hillary Clinton's chance of being elected President at 72\%, {\em The New York Times} estimated Clinton at 91\%, {\em The Huffington Post} 98\%, and the Princeton Election Consortium  99\% \cite{HuffPo,NYTimes1}. 
 Bolstered and perhaps influenced by these probabilities, the analyses of many political pundits pointed to a likely, if not inevitable, Clinton victory.
Yet, at 3am the next morning, Donald Trump delivered a victory speech as the President-Elect of the United States.

In the days and weeks that followed, discussion of the historic election quickly turned into debate over what went wrong and who or what was to blame:
\begin{itemize}\addtolength{\itemsep}{-0.3\baselineskip}
\item some blamed the data, including Sam Wang, co-founder of the Princeton Election Consortium, who told {\em The New York Times} that ``{state polls were off in a way that has not been seen in previous presidential election years}'' \cite{NYTimes1};
\item some changed their story, including $\FTE$, who reframed the narrative from ``{there's a wide range of outcomes, and most of them come up Clinton}'' on the day of the election \cite{Silver1108}
to the assertion that there were ``{plenty of paths to victory for Trump}'' and ``{people should have been better prepared for it}'' afterwards \cite{Silver1111};
\item others were in denial, including Irizarry, who was bold enough to declare 
that ``{statistically speaking, Nate Silver, once again, got it right}'' \cite{SS}.
\end{itemize}

Based on this fallout, $\FTE$, {\em The New York Times}, Princeton Election Consortium, and many newcomers will likely put forth ``new and improved'' methods for the 2020 election.  Time will tell how improved those methods turn out to be.
In the meantime, however, the fields of statistics and data science will continue on much like they did before the election.  
For the field at large, the specific methodologies undertaken to obtain the predictions of 72\%, 91\%, 98\%, 99\% for Clinton are of little significance to the many more salient and consequential applications of statistical methods in science, medicine, finance, etc.  With the interests of broader statistical practice in mind, we dispense with any attempts to diagnose in what ways the election analyses were ``right'' or ``wrong.''  We instead treat the election analysis meta-statistically, extracting three main insights and highlighting pressing questions
about how probabilistic predictions ought to be conducted and communicated moving forward.

First, we address the tendency, even among statisticians and data scientists, to be results-oriented in assessing the effectiveness of their analyses.  
The widely publicized success of $\FTE$'s 2008 and 2012 election predictions coupled with larger societal trends toward quantitative analysis fuel a kind of blind faith in data-driven methods.
Trends toward data-based methods across the spectrum of business, industry, and politics have been accompanied by a shift away from principles in statistical circles. 
As Speed stated plainly in a recent column of the {\em IMS Bulletin}:
\begin{quote}
\label{quote:speed}
{\em If a statistical analysis is clearly shown to be effective [...]~it gains nothing from being described as principled} \cite{Speed}.
\end{quote}
But if the effectiveness of a certain method is primarily judged by the fact that it seems to have worked in hindsight, then what are we left with when such a method fails?  In the wake of the 2016 election, we are left with outrage, confusion, questions, and not many answers.  Moreover, since the effectiveness of any given statistical analysis must be assessed prior to observing the outcome, such a judgment cannot be entirely results-driven.  It must be based on principles.  In  Section~\ref{S:effective} we present three basic {\em effectiveness principles} for any statistical analysis and discuss these principles in the context of the 2016 election.

Second, we discuss perhaps the most mysterious aspect of election analysis and probabilistic analyses in general: the probabilities themselves. 
Is the $\FTE$ probability of $72\%$ for Clinton to be interpreted as a frequency, a degree of belief, a betting quotient, or something else?  Are we to assess the quality of these probabilities in terms of {coherence}, calibration, proper scoring, or some other way?  Do these differences matter?
Even among statisticians there is widespread disagreement on how to answer these questions.  Worse is the impulse by some to dismiss questions about the meaning of probability and the quality of predictions.  
\begin{quotation} \label{quote:mit.guy}
{\em ``[T]he key thing to understand is that [data science] is a tool that is not necessarily going to give you answers, but probabilities,'' said Erik Brynjolfsson, a professor at the Sloan School of Management at the Massachusetts Institute of Technology.

Mr.~Brynjolfsson said that people often do not understand that if the chance that something will happen is 70 percent, that means there is a 30 percent chance it will not occur. The election performance, he said, is ``not really a shock to data science and statistics. It's how it works''} \cite{NYTimes1}.
\end{quotation}
In high-impact and high-exposure applications like this election, when the reputation of data science is on the line, the interpretation and communication of probabilities to the general public ought to be taken more seriously than Brynjolfsson suggests.  After discussing the merits and demerits of the classical interpretations of probability for election predictions in Section~\ref{S:valid}, 
we propose a new approach that avoids the various shortcomings of traditional perspectives by only requiring an interpretation of small probability values, \`a la Cournot \cite{Cournot}.  A key feature of this perspective is that the derived predictions are {\em valid}, i.e., the probability of an erroneous prediction is no more than a specified threshold.  This validity property is achieved by making more conservative predictions, thereby avoiding the charge of ``overselling precision'' 
raised by some after the election \cite{NYTimes1}.

Third, we make the important distinction, arising out of the discussion of Section~\ref{S:effective}, between {\em variability} in statistical estimates and  {\em uncertainty} in the model setup.  While it is common to report the former, it is uncommon to report the latter when making predictions.  More precisely, a statistical model encodes uncertainty in the data, 
which can be used to derive measures of variability for estimators, confidence intervals, etc.  But these measures are only reliable insofar as the model is sound for a given application.  They do not account for variability in estimators due to uncertainty in the choice of model.
In Section~\ref{S:uncertainty}, we describe how to account for this latter type of uncertainty in the statistical analysis.  Specifically, we propose an alternative output of {\em plausibility} rather than {\em probability}, which accounts for uncertainty  in a natural way that makes predictions more conservative and assessments of predictive validity more stringent.

Finally, in Section~\ref{S:fail}, we conclude with some remarks about the statistical community's reaction to the election outcomes, and how the perceived failure of data science in the election predictions might be turned into an opportunity to better communicate the capabilities and limitations of statistical methods.

\section{Were the analyses effective?}
\label{S:effective}

We begin with the premise that statistical analyses ought to be evaluated based on how {\em effective} they are in providing insight about a scientific question of interest, not whether they produce {\em correct} predictions.
In election analysis, the question is simple: {\em Which candidate will win the election?}
The analyses of several outlets, including $\FTE$, Princeton Election Consortium, etc., were conducted with this question in mind.  Were those analyses effective?  

We first stress that in many statistical applications, the correct answer or outcome may never be revealed, making a right/wrong classification of predictions impossible.
Even in cases like the election, where the outcome is eventually revealed, an assessment of effectiveness is only useful if made {\em prior} to learning the outcome.
Effectiveness, therefore, cannot be judged by whether or not an approach ``worked'' in a particular instance, in terms of giving a ``right'' or ``wrong'' prediction. 
Similarly, a determination of effectiveness cannot be based on other statistical analyses, for then the effectiveness of the meta-analysis must be established, {\em ad infinitum}.  Effectiveness must instead  be established by non-statistical means, which include principles, subject matter knowledge, qualitative assessments, intuition, and logic.  
 In this section, we evaluate the 2016 election analyses based on a set of {\em effectiveness principles} \cite{effective}, which are necessary to ensure that the three main components of statistical practice---data, model, and inference---fit together in a coherent way.  In particular, data must be {\em relevant} for the question of interest; the model must be {\em sound} for the given data; and the inference/prediction must be {\em valid} for the given model. 

\subsubsection*{\em Principle~1: The data must be relevant}

For a statistical analysis to be effective, the data must be relevant for answering the question under consideration.  With regards to the question, {\em who will win the presidential election?}, common sense suggests that polling data is relevant to the election outcome.  However, this judgment relies on an assumption that those polled are representative of the general electorate and their responses are truthful (at least at the time the poll is taken).\footnote{For the sake of our discussion, we assume that the analyses carried out by $\FTE$, Princeton Election Consortium, {\em The New York Times}, etc., were based largely on polling data.  $\FTE$ offers a polls-only and a polls-plus prediction; the latter includes some additional information about other current trends, but these estimates rarely differ by more than a few points.}  In reality, both of these assumptions---and others from the classical survey sampling theory---may have been violated to some extent in the 2016 election polling.

The challenge to statistical analysis, however, is not the binary question of {\em relevance} or {\em irrelevance}, but rather the manner in which the data may be relevant for the given question.
Although polling data has been useful in the past and continues to be relevant in the present, the way in which the polls convey this information is likely to change over time.
For example, the advent of social media and the decline of landline telephones has changed the way in which polls are conducted, which in turn may affect (positively or negatively) the information we are able to glean from the polls.

In the 2016 election, in particular, several statisticians, including Gelman, have admitted that the relevance of the polls was incorrectly assessed by many of the analyses.  
\begin{quote}
{\em So Nate [Silver] and I were both guilty of taking the easy way out and looking at poll aggregates and not doing the work to get inside the polls. We're doing that now, in December, but I/we should've been doing it in October. Instead of obsessing about details of poll aggregation, we should've been working more closely with the raw data} \cite{Gelman}.
\end{quote}
Such comments stand in stark contrast to the excuse by Sam Wang that the ``polls were off'' and, more broadly, that
\begin{quote}
{\em In this presidential election, analysts said, the other big problem was that some state polls were wrong
} \cite{NYTimes1}.
\end{quote}

From a scientific point of view, the concept that the {\em polls were wrong} suggests a perplexing juxtaposition of the experiment and the experimenter, the science and the scientist.  The polls are the data used by $\FTE$, Princeton Election Consortium, etc., to make predictions.  And what is a data scientist's job if not to understand data?  Just as it is a biologist's job to understand the biology and a physicist's job to understand the physical world.    
Blaming a faulty statistical analysis on the data is tantamount to a physicist blaming an electron or a biologist blaming a lab rat for a flawed scientific hypothesis.  Armed with relevant data, the data scientist is tasked with gleaning whatever insights the data can provide.  Attempts by Wang and others to blame faulty predictions on erroneous polls, then, only highlight their failure to properly account for the key sources of variation in the data.  
This is not an error in the data, but rather in the modeling of that data.

\subsubsection*{\em Principle~2: The model must be sound}

The statistical model must be sound in the sense that it accounts for sources of variation, whether as a result of data collection or natural variation in the underlying system.  Claims, such as Wang's, that the {\em polls were off} are based primarily on the fact that the usual models did not fit the 2016 polls as well as they have fit past polls.  
Such reasoning is backwards.  The model assumed that the data was relevant in a way that it was not, likely because the chosen model failed to adequately account for certain sources of variability in the polling data, as Gelman points out above.

Leading up to election day, there was ample discussion of a possible ``hidden'' Trump vote,  akin to the so-called {\em Bradley effect} \cite{Bradley} from the 1982 California governor's race.    The reasoning behind a possible hidden Trump vote was based on sound, albeit non-quantitative, reasoning: quite simply, it was socially unacceptable to openly support Trump in many parts of the country and, as the theory goes, Trump voters in those regions would either say nothing or claim to support Clinton but secretly vote for Trump on election day.  This speculation was largely dismissed by data scientists because, unironically, there was no evidence for it in the data.   In the post-election analysis, $\FTE$ holds to the claim that ``shy'' voters were not the reason why the outcome differed from what the polls suggested \cite{Silver1116}, but still based on the assumption that the existence of the ``hidden'' voter would have manifested itself in the data.  Ultimately, it is unclear whether the Trump vote was actually ``hidden'' or merely overlooked by the current statistical apparatus.  What is clear, however, is that the predictors' models were consistently {\em unsound} in that they did not properly account for the variability in the polls.

The predictions also failed to reflect the {\em uncertainty} in what information the polls conveyed about the eventual outcome.  For example, rather than dismiss the possibility of a hidden Trump vote, or any other explanation for why the polls may have been skewed, the predictions would have been more insightful if they reflected the model uncertainty in their reported predictions.  While accounting for such uncertainty makes the predictions less definitive, it makes them more reliable and less sensitive to arbitrary or unjustified assumptions.  We discuss the issue further in Section~\ref{S:uncertainty}.

\subsubsection*{\em Principle~3: The prediction must be valid}

For a statistical analysis to be effective, the inference drawn must be valid, meaning that the conclusions can be justified to someone else who otherwise agrees with the model setup.  For this, it must be clear how the statistical output is to be interpreted.  In the present context, the output assigns a percentage to each candidate---e.g., 72\% for Clinton and 28\% for Trump---and inference corresponds to a prediction based on the output.  Assessing the quality of these inferences requires an understanding of how these probabilities are to be interpreted.  Therefore, the probabilities themselves do not constitute a prediction---{\em a prediction is an interpretation of the probabilities}.

For example, on October 24, 2016, Silver \cite{Silver1024} estimated Clinton's probability of victory at 83--85\%.  He interpreted this figure to mean:
\begin{quote}
{\em Hillary Clinton is probably going to become the next president}.
\end{quote}
Where does this conclusion come from?  Perhaps Silver interpreted the estimated 83\%/17\% Clinton/Trump probabilities (at that time) in terms of betting odds, making Clinton roughly a 5-to-1 favorite to win.  Or maybe he instead interprets these percentages vaguely as ``measures of risk,'' as he said in a TV interview after the election \cite{DailyShow}.
Silver is free to interpret these values as he pleases when arriving at his prediction that Clinton will ``probably'' win, but a clear interpretation is necessary if the 83\% probability is to convey any additional information in terms of how likely the prediction is to be correct.  In particular, Silver's choice of the word {\em probably} is not insignificant; it suggests some vague level of certainty, apparently stronger than ``maybe'' and weaker than ``definitely,'' which can have a significant influence on the opinions---and even actions---of the average $\FTE$ consumer.  For example, contrast Silver's prediction that Clinton will ``probably'' win based on probability 83\% on October 24th to his prediction that she will ``most likely'' win based on probability 72\% on November 8 \cite{Silver1108}.  Of course, 83\% is bigger than 17\% so, if forced to choose between the two, it makes sense that Silver would choose Clinton.  But nothing forces Silver, or any one else, to make such a choice: one could, quite reasonably, declare the race ``too close to call.'' 

Since the various outlets, $\FTE$ included, offer many different numbers without a clearly spelled out rationale for how to interpret them, it is hard to argue for the validity of any one of these predictions.  And, furthermore, it is unclear whether and how to compare probabilities from different sources.  The Princeton Election Consortium's 99\% probability conveys greater confidence in a Clinton win than {\em FiveThirtyEight}'s 72\%, but on what scale are we to make such comparisons?

\section{Valid predictions}
\label{S:valid}

Consider the pair $(X,Y)$, where $X$ represents polling data and any other information used to arrive at a prediction and $Y$ is the actual outcome of the election.  In general, $Y$ takes values in the finite set $\YY$ consisting of all eligible presidential candidates, i.e., natural-born U.S.\ citizens over the age of thirty-five who have resided within the United States for at least fourteen years as of 2016.  Much of our discussion below, however, specializes to the case in which $Y$ takes only two values, i.e., $\YY = \{\clinton, \trump\}$ is binary, Clinton ($\clinton$) or Trump ($\trump$).  The decision to disregard other candidates (Johnson, Stein, McMullen, etc.)~is itself an inference based on the fact that every conceivable model, as of early November 2016, gave these candidates a negligible win probability; see Section~\ref{SS:valid}.

Prediction probabilities are determined by a mapping $x \mapsto \pi_x$ that takes data $x$ to a probability distribution $\pi_x$ on $\YY$.   
Technically, $\pi_x$ could be any (possibly $x$-dependent) distribution, but, following our discussion in Section~\ref{S:effective}, a natural approach is to start with a model, i.e., a joint distribution $\prob$ for $(X,Y)$, and to take $\pi_x$ as the conditional distribution of $Y$, given $X=x$, under $\prob$, namely, 
\begin{equation}
\label{eq:conditional}
\pi_x(y) = \prob(Y=y \mid X=x), \quad y \in \YY. 
\end{equation}
For example, $\FTE$'s final reported probabilities on election day were $\pi_x(\clinton)=0.72$ and $\pi_x(\trump)=0.28$.
It is not our intention here to question the modeling decisions that produced these numbers. 
In fact, Silver's analysis seems to be based on too many {\em ad hoc}, inarticulable assumptions to evaluate in any scientific or systematic way, e.g., Silver's weighting scheme for certain polls based on his own subjective grading of their presumed accuracy.  Instead, we ask some more basic questions: {\em What do these numbers mean?} {\em What do they tell us about the election outcome?}

\subsection{Probabilities and their interpretation}
\label{SS:prob}

In the terminology of classical debates over probability, are these election probabilities to be interpreted as frequencies, subjective degrees of belief, betting quotients, or something else?  It turns out that there are elements of each interpretation, at least in the predictions from $\FTE$.  The discussion that follows highlights the difficulty in interpreting these probabilities from these classical points of view, and we offer a first look at an alternative approach that may be fruitful.  

\vspace{-3mm} \paragraph{Objective probabilities and frequencies.}  
In the case of $\FTE$, this 72\% figure represents the {\em frequency} of times Clinton won in a large number of simulations from their mathematical model.
In this sense, the probabilities are frequencies, but how do we interpret such a frequency in the context of the real world?
The frequency of 72\% derived from the $\FTE$ simulations is a synthetic frequency in that it is calculated by repeatedly sampling from an idealized model for the election outcome.
It does not represent a frequency in the same sense as frequencies obtained by tossing a physical coin a large number of times.
And, of course, the concept of repeating the election a large number of times is a fantasy.

Nevertheless, some of the post-election confusion stemmed from a widespread feeling in the general public that prediction probabilities correspond to some objective reality.  
For example, $\FTE$'s final pre-election report hints at a real-world interpretation by stating Clinton's ``chance of winning'' at 72\% \cite{Silver1108}; and {\em The New York Times} made a similar statement, giving Clinton an 85\% ``chance to win'' \cite{Upshot1108}.   There is a tendency to characterize Trump's victory as ``improbable'' based on the low percentages he was given under the models by $\FTE$, Princeton Election Consortium, {\em The New York Times}, and many others, but such an assessment affords those predictions an objective interpretation that may not be warranted.
For example, suppose that an urn contains 100 balls---50 blue and 50 red---and a ball is to be chosen completely at random.  If we know the composition of the urn, then neither color is {\em improbable}.  On the other hand, if we don't know the composition of the urn, but we assume that 1 ball is blue and 99 are red, then we would assign ``low credence,'' say, to drawing a blue ball even though such an outcome is by no means improbable.

\vspace{-3mm} \paragraph{Subjective probabilities and calibration.}  

Interpreting the probabilities as objective frequencies 
fails to account for the sensitivity of these probabilities to subjective modeling decisions.
In fact, the probabilities reported by $\FTE$ and others are quite sensitive to subjective modeling assumptions; for example, on October 24th, Silver estimated Clinton's chance at 85\%, but also discussed how the estimate would increase to anywhere from 90\% to 99.8\% if certain assumptions were modified \cite{Silver1024}.

Instead of interpreting these percentages as the long run frequency that the specific event ``Clinton is elected president in 2016'' occurs in a large number of repeated trials of the election, we can possibly interpret the probabilities as frequencies relative to the mechanism by which the percentages were derived.
Specifically, we may interpret the 72\% probability for Clinton in terms of how often we can expect any arbitrary event $E$ to occur when assigned probability $72\%$ by the methods employed by $\FTE$.
If the class of events $E$ to which the $\FTE$ simulations assign a 72\% probability occur roughly 72\% of the time, then the 72\% probability may have a somewhat objective interpretation as a frequency in terms of the mechanism by which it was calculated, but not in terms of the event that Clinton will win.

With this interpretation, one can supplement the model specification step with checks of how well the specified model fits the available data \cite{rubin1984}, with the goal of calibrating the subjective probabilities to some kind of long-run frequencies.  A necessary consequence of this form of calibration is that 72\% predictions have to be wrong about 28\% of the time \cite{Gelman}.  
Further shortcomings of calibration are highlighted in \cite{Shafer}.

\subsection{A new perspective---validity}
\label{SS:valid}

Generally speaking, it seems safe to assume that consumers of the election predictions don't care about the various interpretations of probability, e.g., betting odds, frequencies, etc.  They just want the prediction to be correct.  Since predictions come before the election is decided, there is uncertainty in these predictions, and the only logical reason to put stock in a given prediction is if the prediction is, in some sense, unlikely to be wrong.  With this in mind, we propose the condition of predictive {\em validity}, which we describe informally as follows:
\begin{quote}\label{quote:valid}
{\em a prediction derived from a given method is {\em valid} if the probability, relative to the posited model, that the method gives a wrong prediction is not too large.}
\end{quote}
This notion is closely related to Fisher's logic of statistical inference and Occam's razor: given the model, if the prediction derived from a particular method is incorrect with low probability, then there is no reason to doubt the prediction made in a particular case.  This goes beyond the idea of calibration discussed above in the sense that validity allows us to control the accuracy of our prediction, to mitigate our damages so to speak.  There are also some connections with the guarantees advocated by \cite{what.is, impred}.  

As we discuss further below, an important aspect of predictive validity is that it only requires that we can interpret small probabilities.  This feature aligns with the rationale of {\em Cournot's principle} \cite{Cournot},
according to which small (and therefore large) probabilities are the only interpretable or actionable probabilities.  For example, a moderate probability of 40\% for an event $E$ is likely too large for us to comfortably predict that $E$ will not occur, that is, to ``take action'' in predicting that its complement, $E^c$, will occur.  But we may well be comfortable to predict $E^c$ if the probability of $E$ were, say, $0.001$.  Our proposed approach also lends itself to accounting for uncertainty in the model, thereby avoiding an exaggeration of the quality of predictions, the ``overselling of precision'' discussed by some after the election \cite{NYTimes1}.  We discuss this further in Section~\ref{S:uncertainty}.

\subsection{Prediction sets and their validity}
\label{SS:sets}

By Cournot's principle, the only meaningful or actionable probabilities are the small ones.  In keeping with this principle, our proposal is to avoid directly interpreting those probabilities that are not small, and thus not easily interpretable.  Towards making this more formal, for a given prediction rule $x \mapsto \pi_x$ and a suitably small $\alpha > 0$ (see Proposition~\ref{prop:calibrated}), we define the {\em $\alpha$-level prediction set} as 
\begin{equation}
\label{eq:pred.set}
\Pi_x(\alpha) = \bigl\{ y \in \YY: \pi_x(y) > \alpha \bigr\}. 
\end{equation}
This set consists of all those election outcomes that are at least $\alpha$-probable according to the given prediction rule and the given polling data $x$.  We propose to report the prediction set $\Pi_x(\alpha)$ instead of committing to choose a single candidate and then potentially struggling to interpret the magnitudes of the individual prediction probabilities when making this choice.   By working with the set in \eqref{eq:pred.set}, we only commit ourselves to interpreting the probabilities with small magnitudes, i.e., we only rule out a candidate if his/her prediction probability is less than the specified threshold $\alpha$. 

For the 2016 election, the reported probabilities for Clinton were 0.72, 0.91, 0.98, and 0.99 for $\FTE$, {\em The New York Times}, {\em The Huffington Post}, and the Princeton Election Consortium, respectively.  Taking these probabilities as the values for $\pi_x(\clinton)$, if $\alpha=0.05$, then the resulting prediction sets are $\{\clinton, \trump\}$ for $\FTE$ and {\em The New York Times}, and $\{\clinton\}$ for the two others.

Our proposed approach that predicts with a set of candidates is more conservative, in general, than an approach that is obligated to choose a single candidate.  But if the polls are not informative enough to distinguish two or more candidates, then the analysis is prudent to avoid such a potentially misleading distinction.  Indeed, an important point that came up in the post-election discussions is that the predictions that heavily favored Clinton may have unintentionally caused some Clinton supporters not to vote.  Our proposal to make specific predictions only when the data makes it a clear decision helps to avoid the undesirable situation where the prediction itself affects the outcome.  
We note, however, that even under the suggested approach, the analysis by the Princeton Election Consortium and others that assigned probabilities of 98\% or higher to Clinton would still have given the wrong predictions, recalling the importance of a sound model in arriving at the probabilities $\pi_x$.

As we demonstrate below, under mild conditions, the $\alpha$-level prediction set $\Pi_x(\alpha)$ controls the probability of making a wrong prediction at $\alpha$.  More formally, we say that prediction based on a $\alpha$-level prediction set $\Pi_x(\alpha)$ defined in \eqref{eq:pred.set} is {\em valid} at level $\alpha$, relative to the specified model $\prob$ for $(X,Y)$, if 
\begin{equation}
\label{eq:valid}
\prob\{ \Pi_X(\alpha) \not\ni Y\} \leq \alpha.
\end{equation}
In words, the prediction is valid at level $\alpha$ if the $\prob$-probability of making a wrong prediction with the set $\Pi_X(\alpha)$ is no more than the specified $\alpha$.
Compare this to the informal notion of validity introduced in Section~\ref{SS:valid} above.  Proposition~\ref{prop:stochastic.order} gives a sufficient condition for \eqref{eq:valid} that is easy to arrange.  

\begin{prop}
\label{prop:stochastic.order}
A prediction set $\Pi_X(\alpha)$ as defined in \eqref{eq:pred.set} is valid in the sense of \eqref{eq:valid} if the corresponding $x \mapsto \pi_x$ satisfies 
\begin{equation}
\label{eq:stochastic.order}
\prob\{\pi_X(Y) \leq \alpha\} \leq \alpha.
\end{equation}
\end{prop}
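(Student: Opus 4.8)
The plan is to show that the ``wrong prediction'' event appearing in \eqref{eq:valid} is literally the same event as the one whose probability is bounded in the hypothesis \eqref{eq:stochastic.order}; once this identification is made, the conclusion is immediate. So the whole argument is a matter of unwinding the definition of the prediction set, with no estimation or inequality-chasing required beyond invoking the assumption.

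First I would fix a realization $X = x$ and $Y = y$ and read off from the definition \eqref{eq:pred.set} that $y \in \Pi_x(\alpha)$ holds exactly when $\pi_x(y) > \alpha$. Negating the membership, $y \notin \Pi_x(\alpha)$ holds if and only if $\pi_x(y) \le \alpha$. The key bookkeeping point is that the strict inequality used to \emph{define} $\Pi_x(\alpha)$ in \eqref{eq:pred.set} is precisely what makes its complement the \emph{non-strict} inequality $\pi_x(y) \le \alpha$, so the two descriptions agree at the boundary value $\alpha$ with neither gap nor overlap.

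Since this equivalence holds pointwise for every pair $(x,y)$, the corresponding random events coincide as subsets of the sample space, i.e.
\[
\{\Pi_X(\alpha) \not\ni Y\} = \{\pi_X(Y) \le \alpha\}.
\]
Taking $\prob$-probabilities of both sides gives $\prob\{\Pi_X(\alpha) \not\ni Y\} = \prob\{\pi_X(Y) \le \alpha\}$, and the hypothesis \eqref{eq:stochastic.order} bounds the right-hand side by $\alpha$. This is exactly the validity condition \eqref{eq:valid}, completing the argument.

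I do not expect any genuine obstacle here: the ``hard part'' is only the minor care needed to align the strict and non-strict inequalities so that the complement of the prediction set is captured exactly. In fact, because the two events are identical rather than merely nested, condition \eqref{eq:stochastic.order} is not only sufficient but also necessary for validity, so the proposition could be strengthened to an equivalence if desired.
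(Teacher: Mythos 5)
Your proof is correct and is essentially identical to the paper's: both reduce the claim to the observation that $\{\Pi_X(\alpha)\not\ni Y\}$ and $\{\pi_X(Y)\le\alpha\}$ are the same event, after which \eqref{eq:stochastic.order} gives the bound directly. Your closing remark that the condition is also necessary (since the events coincide exactly) is a correct, if minor, strengthening that the paper does not state.
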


\begin{proof}
By definition of $\Pi_X(\alpha)$, 
\[ \prob\{\Pi_X(\alpha) \not\ni Y\} = \prob\{\pi_X(Y) \leq \alpha\}, \]
so the validity result \eqref{eq:valid} follows immediately from \eqref{eq:stochastic.order}.  
\end{proof}

We note that the setup is very general---any choice of $x \mapsto \pi_x$ that satisfies \eqref{eq:stochastic.order} would lead to valid prediction---but \eqref{eq:stochastic.order} automatically holds for an interval of $\alpha$ values when $\pi_x$ is the conditional probability \eqref{eq:conditional} based on the model $\prob$.  

\begin{prop}
\label{prop:calibrated}
Let $\pi_x$ be the $\prob$-conditional probability in \eqref{eq:conditional}.  Then there exists $A > 0$ such that \eqref{eq:stochastic.order} holds for all $\alpha \in (0,A)$.   In particular, $A$ can be taken as 
\begin{equation}
\label{eq:A}
A = \min_x \pi_x^{(2)},
\end{equation}
where $\pi_x^{(2)}$ is the second smallest among the strictly positive $\pi_x(y)$ values, for the given $x$, as $y$ ranges over the (finite) set $\YY$. 
\end{prop}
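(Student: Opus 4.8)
The plan is to verify the stochastic-ordering condition \eqref{eq:stochastic.order} by conditioning on $X$ and then analyzing the resulting conditional probability one outcome at a time. First I would fix $x$ and use the fact that, since $\pi_x$ is precisely the conditional law of $Y$ given $X=x$ under $\prob$, the conditional version of the event in \eqref{eq:stochastic.order} can be written as a sum:
\begin{equation}
\label{eq:cond.sum}
\prob\{\pi_X(Y) \leq \alpha \mid X = x\} = \sum_{y \in \YY \,:\, \pi_x(y) \leq \alpha} \pi_x(y).
\end{equation}
Any outcome $y$ with $\pi_x(y)=0$ is formally in the index set but contributes nothing, so the sum effectively ranges only over those strictly positive $\pi_x(y)$ that do not exceed $\alpha$.

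The crux is to count how many strictly positive values can lie at or below $\alpha$ once $\alpha < \pi_x^{(2)}$. Listing the strictly positive $\pi_x(y)$-values as order statistics $\pi_x^{(1)} \leq \pi_x^{(2)} \leq \cdots$ (with multiplicity), the constraint $\alpha < \pi_x^{(2)}$ forces every value except possibly the smallest, $\pi_x^{(1)}$, to exceed $\alpha$. Hence at most one term survives in \eqref{eq:cond.sum}, and that term --- if present at all --- equals $\pi_x^{(1)} \leq \alpha$ by the very definition of the event. The right-hand side of \eqref{eq:cond.sum} is therefore either $0$ or $\pi_x^{(1)}$, and in both cases it is bounded above by $\alpha$. (I note in passing that counting with multiplicity is essential: if two outcomes tie at the smallest positive value, then for $\alpha$ between that value and the next one the sum would pick up both and could exceed $\alpha$, which is exactly why the threshold must sit below $\pi_x^{(2)}$.)

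To finish I would set $A = \min_x \pi_x^{(2)}$, so that $\alpha < A$ guarantees $\alpha < \pi_x^{(2)}$ simultaneously for every $x$ and the conditional bound $\prob\{\pi_X(Y) \leq \alpha \mid X = x\} \leq \alpha$ holds uniformly in $x$. Averaging over $X$ by the tower property then gives
\begin{equation}
\label{eq:tower}
\prob\{\pi_X(Y) \leq \alpha\} = \E\bigl[\, \prob\{\pi_X(Y) \leq \alpha \mid X\} \,\bigr] \leq \E[\alpha] = \alpha,
\end{equation}
which is precisely \eqref{eq:stochastic.order}; since $A>0$ by construction, this delivers the claim for every $\alpha \in (0,A)$.

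The main obstacle I anticipate is not analytic but a matter of carefully disposing of edge cases. I must treat ties at the minimum (where $\pi_x^{(1)} = \pi_x^{(2)}$, so that no term survives and the sum is simply $0$), and the degenerate case in which $\pi_x$ is a point mass and no second-smallest positive value exists --- here one adopts the convention $\pi_x^{(2)} = 1$, or argues directly that \eqref{eq:cond.sum} is $0$ for every $\alpha < 1$. Finally, the assertion $A>0$ rests on $\min_x \pi_x^{(2)} > 0$, which is automatic when $X$ ranges over a finite set (the setting implicitly in force), but would require the infimum over $x$ to be strictly positive if the data space were infinite.
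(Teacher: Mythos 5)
Your argument is correct and is essentially the paper's own proof: both condition on $X$, write the conditional probability as $\sum_{y} 1_{\pi_x(y)\le\alpha}\,\pi_x(y)$, observe that for $\alpha < \min_x \pi_x^{(2)}$ at most one strictly positive term survives and is itself at most $\alpha$, and then average over $X$. Your extra attention to ties at the minimum, point masses, and the positivity of the minimum over $x$ only tightens edge cases the paper leaves implicit.
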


\begin{proof}
Using iterated expectation and the definition of $\pi_x$, we get 
\begin{align*}
\prob\{ \pi_X(Y) \leq \alpha\} & = \E\{1_{\pi_X(Y) \leq \alpha}\} \\
& = \E\Bigl\{ \sum_{y \in \YY} 1_{\pi_X(y) \leq \alpha} \, \prob(Y=y \mid X) \Bigr\} \\
& =\E\Bigl\{\sum_{y \in \YY} 1_{\pi_X(y)\leq\alpha}\, \pi_X(y) \Bigr\}
\end{align*}
If $\alpha < A$, then the events $\{\pi_X(y) \leq \alpha\}$, $y \in \YY$, are mutually exclusive, so at most one term in the above summation is positive.  Since each term is no more than $\alpha$, this implies that the summation itself is no more than $\alpha$, hence \eqref{eq:stochastic.order}.
\end{proof}

Note that if $N=2$ in Proposition \ref{prop:calibrated}, 
say, $\YY = \{\clinton, \trump\}$, then $A$ in \eqref{eq:A} reduces to 
\[ A = \min_x \max\{\pi_x(\clinton), \pi_x(\trump)\}. \]
Unless there are some counter-intuitive constraints on the Clinton/Trump probabilities, e.g., $\pi_x(\clinton) \leq \frac13$ for all values of $x$, we expect $A=\frac12$.  So, the interval $(0,A)$ would certainly be wide enough to accommodate traditionally ``small'' values, such as $\alpha=0.05$.  

The above result is conservative in that the conclusion \eqref{eq:stochastic.order} may hold for a wider range of $\alpha$ than that given by $A$ in \eqref{eq:A}.  In fact, in some cases, \eqref{eq:stochastic.order} holds for all $\alpha \in (0,1)$, which amounts to $\pi_X(Y)$ being stochastically no smaller than a random variable having a uniform distribution on $(0,1)$; see Figure~\ref{fig:cdf}.  There are cases, however, where \eqref{eq:stochastic.order} does not hold for all $\alpha \in (0,1)$.  For an extreme example, in the case $\YY = \{\clinton, \trump\}$, suppose that the polling data is completely uninformative, so that $\pi_x(\clinton) = \pi_x(\trump) = \frac12$ for all $x$.  Then we have $\prob\{\pi_X(Y) \leq \alpha\} = 0$ for all $\alpha < \frac12$ and $\prob\{\pi_X(Y) \leq \alpha\} = 1$ for all $\alpha \geq \frac12$, so that \eqref{eq:stochastic.order} holds for $\alpha \in (0,\frac12)$ but in a trivial way.

\subsection{Illustrative example}
\label{SS:toy1}

Here we consider an example that is simple enough that the necessary calculations can be done without much trouble but also relevant enough that it illustrates our main points.
We revisit this example in Section~\ref{SS:toy2} below.  

Assume that there are only two viable candidates, Clinton and Trump, so that $\YY = \{\clinton, \trump\}$ and $N=2$.  Start with a fixed sample size $n$, poll $n$ voters, and record the number $X$ that indicate support for Trump.  (For now, we assume that there is no non-response, so that $n-X$ of the polled voters indicate support for Clinton; we relax this assumption in Section~\ref{SS:toy2}.)  A simple choice is a binomial model, i.e., $X \sim \bin(n,\theta)$, with a default flat prior for $\theta$, resulting in a model for $X$ that is uniform on $\{0,1,\ldots,n\}$.  To complete our specification of $\prob$, the joint distribution for the polling data $X$ and the election outcome $Y$, it remains to describe the conditional distribution of $Y$, given $X=x$, in \eqref{eq:conditional}.  Here we propose a logistic-type model
\begin{equation}\label{eq:logistic} \pi_x(\trump) = \frac{\exp\{\lambda(\hat\theta - \frac12)\}}{1 + \exp\{\lambda(\hat\theta - \frac12)\}} \quad \text{and} \quad \pi_x(\clinton) = 1 - \pi_x(\trump), \end{equation}
 where $\hat\theta = x/n$ and $\lambda > 0$ is a specified constant.  Assuming that the poll is reasonably informative, if $\hat\theta$ is much larger than 0.5---even, say, 0.6---then the probability assigned to a Trump victory ought to be close to 1, so a relatively large value of $\lambda$ makes sense; for this illustration, we consider $\lambda=10$.  A plot of this $\pi_x(\trump)$ curve, as a function of $\hat\theta$, is shown in Figure~\ref{fig:prob}.  The distribution function $G$ of $\pi_X(Y)$, derived from the joint distribution of $(X,Y)$ under this model, is displayed in Figure~\ref{fig:cdf}; note the stochastic ordering from \eqref{eq:stochastic.order}.  Validity of the corresponding prediction set $\Pi_X(\alpha)$, for any $\alpha \in (0,1)$, then follows immediately from Propositions~\ref{prop:stochastic.order} and \ref{prop:calibrated}.  

\begin{figure}
\begin{center}
\scalebox{0.8}{\includegraphics{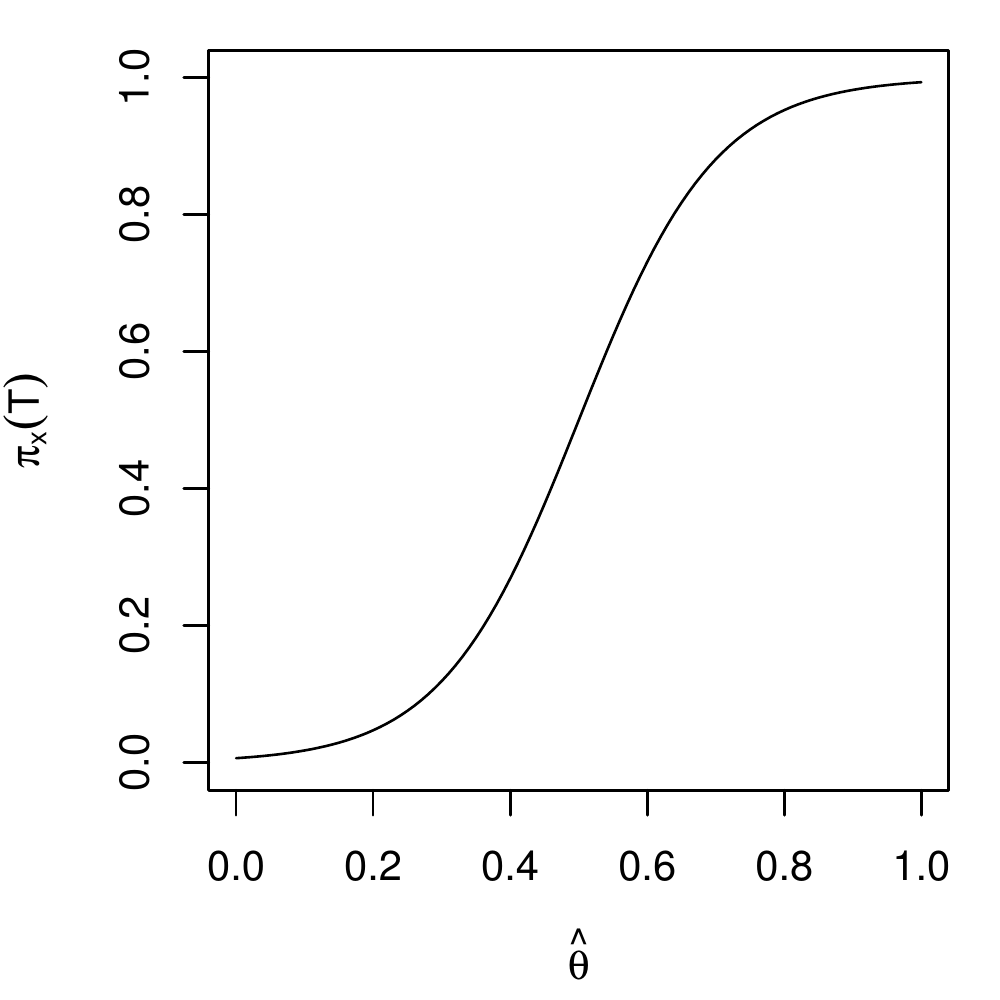}}
\end{center}
\caption{Plot of the probability $\pi_x(\trump)$ in Section~\ref{SS:toy1}, as a function of $\hat\theta=x/n$.}
\label{fig:prob}
\end{figure}

The prediction set $\Pi_x(\alpha)$, for small $\alpha$, would contain both $\clinton$ and $\trump$ unless 
\[ \hat\theta \leq \frac12 + \frac{1}{\lambda} \log\frac{\alpha}{1-\alpha} \quad \text{or} \quad \hat\theta \geq \frac12 + \frac{1}{\lambda} \log\frac{1-\alpha}{\alpha}. \]
For a numerical example, suppose that $X=470$ and we take the model in \eqref{eq:logistic} with $\lambda=10$.
According to the classical distribution theory, there is roughly a 3\% margin of error, so this seems to give an impression that Clinton has a significant lead. 
But under the model in \eqref{eq:logistic}, the prediction probabilities are 
\[\pi_x(\trump)=\frac{\exp\{10(0.47-0.50)\}}{1+\exp\{10(0.47-50)\}}=0.426\quad\text{and}\quad\pi_x(\clinton)=1-0.426=0.574,\]
giving both candidates an appreciable probability of winning.
For sure, no validity guarantees can be given for a prediction that Clinton wins based only on this available information.  Therefore, we argue that, based on this data and this model, the analyst should acknowledge that the race is too close to make a prediction, and report the set $\{\clinton, \trump\}$ based on this poll data.

\begin{figure}
\begin{center}
\scalebox{0.8}{\includegraphics{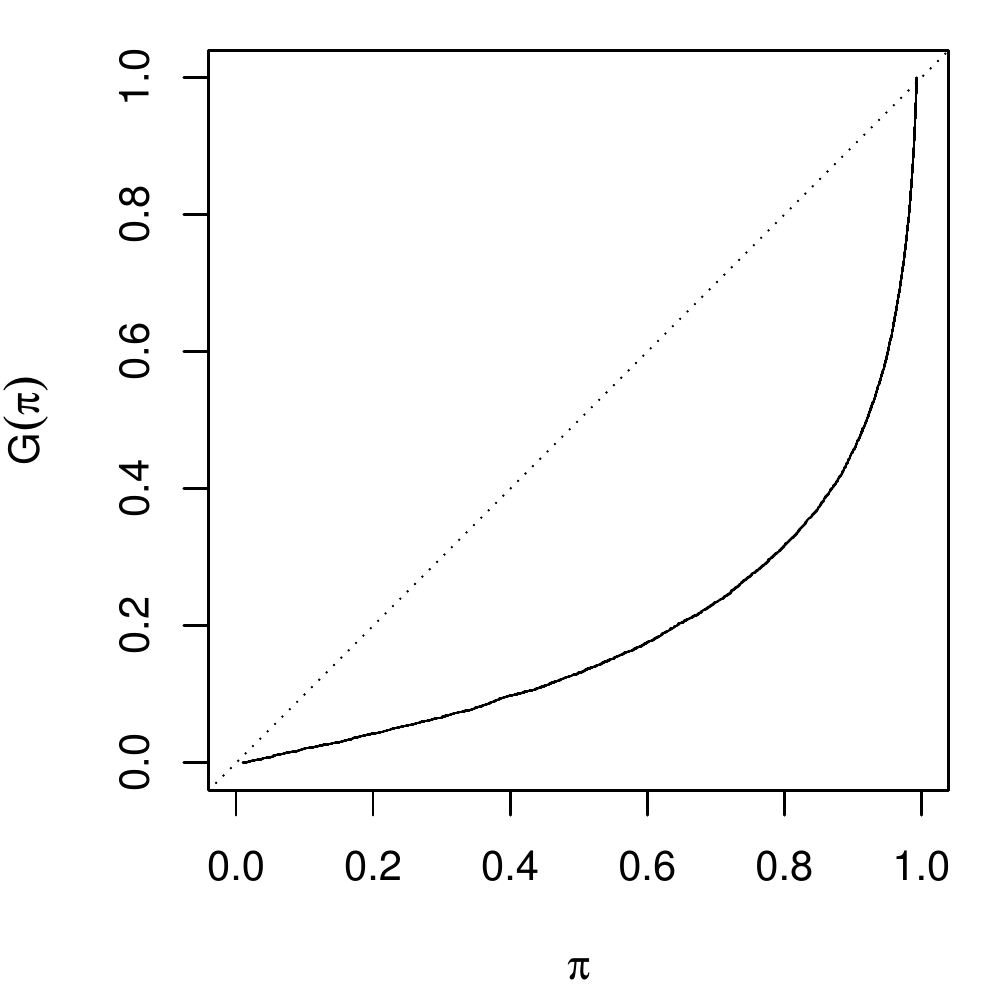}}
\end{center}
\caption{Distribution function $G(\pi) = \prob\{\pi_X(Y) \leq \pi\}$, for $\pi \in (0,1)$, derived from the joint distribution of $(X,Y)$ described in Section~\ref{SS:toy1}.}
\label{fig:cdf}
\end{figure}

\section{Predictions accounting for uncertainty}
\label{S:uncertainty}

\subsection{Probabilities to plausibilities}
\label{SS:plaus}

A serious concern about the use of (conditional) probabilities $\pi_x(y)$ in \eqref{eq:conditional} for prediction is that these only describe the {\em variability} of the outcome $Y$, given $X=x$, with respect to a given model $\prob$.  They do not reflect the {\em uncertainty} in the choice of $\prob$ itself.  Since the outcome $Y$ is usually binary in the election context, variability is difficult to interpret and arguably less important than model uncertainty.  

For example, recall Silver's October 24th acknowledgment \cite{Silver1024} that his estimates for Clinton could change from 85\% up to 99.8\% depending on the choice of model.  But then he ultimately reports an estimate that does not actually reflect this uncertainty.  An important lesson from this past election is that model uncertainty can be substantial, so it should somehow be incorporated into the predictions.  

A realistic situation is one where subject-area knowledge is used to narrow down to a collection $\probset$ of reasonable models $\prob$, but unable to identify a single best model.  That is, we ``don't know'' which model within this collection is most appropriate.  For example, in this past election, there was speculation of a ``hidden'' Trump vote, and among the collection of models that differ in the nature and extent to which the ``hidden'' vote influences the outcome, we don't know which is most appropriate; see Section~\ref{SS:toy2}.  Silver and others chose models that ignored the possible ``hidden'' Trump vote, and it turns out that this may have been a poor choice.  As discussed above, model choice is a subjective decision, so the only thing {\em wrong} with the choices made by Silver and others is in not accounting for their uncertainty in this choice. 

To account for uncertainty, we propose an adjustment to the output reported for prediction.  Specifically, given a collection $\probset$ of candidate models, we recommend reporting features of the set function 
\begin{equation}
\label{eq:pi.sup}
\upi_x(B) = \sup_{\prob \in \probset} \prob(Y \in B \mid X=x), \quad B \subseteq \YY, 
\end{equation}
in particular, 
\[ \upi_x(y) = \sup_{\prob \in \probset} \prob(Y=y \mid X=x), \quad y \in \YY. \]
In Section \ref{SS:interpret}, we discuss a number of important consequences in using the alternative prediction output in \eqref{eq:pi.sup}.  Here we simply point out that, since the supremum in $\upi_x(y)$ is taken separately for each $y \in \YY$, the resulting $\upi_x$ need not be a probability distribution.  For example, in the case where $\YY = \{\clinton, \trump\}$, it may be that $\upi_x(\clinton) + \upi_x(\trump) > 1$.  Therefore, since $\upi_x(y)$ may not be probabilities, we instead call them {\em plausibilities}, following Shafer's terminology \cite{Shafer1976}.  
In this terminology, an outcome $Y=y$ is {\em plausible} given $X=x$ as long as $Y=y$ has a large enough probability under {\em some} candidate model $\prob\in\probset$.
Conversely, an outcome is {\em implausible} if it has a small probability under all candidate models $\prob\in\probset$.

\subsection{Interpreting plausibilities}
\label{SS:interpret}

Our motivation for shifting from probabilities in Section~\ref{SS:prob} to plausibilities in Section~\ref{SS:plaus} is to account for model uncertainty in the prediction output.  Our proposal in \eqref{eq:pi.sup} satisfies this objective for two main reasons:
\begin{itemize}\addtolength{\itemsep}{-0.3\baselineskip}
\item Evaluating $\upi_x(y)$ requires the data analyst to be explicit about the class of models being entertained.  Compare this to standard practice where several models are considered but the results of only one cherry-picked analysis are reported \cite{gelman.lokin.2014}. 
\item Maximizing over $\probset$ is consistent with our general understanding (via Cournot's principle) that only small probability values are actionable.  In particular, when the model itself is uncertain---meaning that different candidate models in $\probset$ assign vastly different probabilities to a particular outcome---then we should be less inclined to take action based on the assessment of the more favorable model.  Taking the supremum in \eqref{eq:pi.sup} has the effect of pushing some borderline-actionable probabilities into the unactionable category.  
\end{itemize}

Plausibilities are surely less familiar and arguably more complicated than probabilities, so some additional care is needed to interpret them.  First, any set-function $p$ on the power set of $\YY$ that satisfies $p(\varnothing) = 0$, $p(B) \geq 0$, and $p(B_1) + p(B_2) \geq p(B_1 \cup B_2)$ is a plausibility function.  It is easy to check that the function $\upi_x$ in \eqref{eq:pi.sup} meets these requirements.  Towards an interpretation, it helps to regard $\upi_x$ it as an {\em upper probability} \cite{dempster1967}; that is, $\upi_x(B)$ is the maximal probability assigned to $B$ as $\prob$ varies over the collection $\probset$.  Conversely, there is a corresponding {\em lower probability} 
\[ \lpi_x(B) = 1 - \upi_x(B^c), \quad B \subseteq \YY, \]
or {\em belief function} \cite{Shafer1976}, obtained by replacing supremum in \eqref{eq:pi.sup} with infimum.  

The lower probabilities $\lpi_x$ and, therefore, also upper probabilities $\upi_x$ admit a clear interpretation in terms of bets \cite{shafer2011, williams1978}.  If our lower probability on an event $B\subseteq\YY$ is given by $\lpi_x(B)$ and we are offered the opportunity to bet on $B$, then we would accept any price up to $\lpi_x(B)$ since such a bet is coherent with respect to every conceivable model $\prob$ in $\probset$.  This interpretation of $\lpi_x(B)$ in terms of betting is more conservative than the ordinary Bayesian arguments for coherence in terms of two-sided bets.  In the conventional setting, we are offered a bet at price $p_B$ on $B$ and $1-p_B$ on $B^c$.
Note that by the requirement $\upi_x(B)+\upi_x(B^c)\geq1$, we have $\lpi_x(B)+\lpi_x(B^c)\leq 1$, so that if $p_B\in(\lpi_x(B),\upi_x(B))$, then we would not accept a bet on either $B$ or $B^c$.

Another interesting feature of the upper and lower probabilities is that the difference
\[ \upi_x(B) - \lpi_x(B)\]
measures model uncertainty about $B$, 
what Dempster \cite{dempster2008} refers to as a ``don't know'' probability, which measures the amount of uncertainty about $B$ in the model $\probset$.  The greater this difference, the more uncertainty there is about how to assign probabilities to the event $B$, and hence the more we ``don't know'' about the proper assessment of $B$.  In the election context, where we are interested primarily in the probabilities assigned to particular candidates, we have 
\[ \upi_x(y) = \sup_{\prob \in \probset} \prob(Y=y \mid X=x) \quad \text{and} \quad \lpi_x(y) = \inf_{\prob \in \probset} \prob(Y = y \mid X=x), \]
and the uncertainty about the model is reflected in the fact that the difference $\upi_x(y) - \lpi_x(y)$ is positive at least for some $y \in \YY$.  Under the view that only small probabilities are interpretable, the most meaningful/useful output is $\upi_x(y)$, $y \in \YY$, since having $\upi_x(y)$ small means that the event $Y=y$ has small probability under all candidate models $\prob\in\probset$, and is thus actionable even in the presence of model uncertainty.  

Finally, our choice to work with the plausibility or upper probability $\upi_x$ is consistent with the proposed notion of validity and the corresponding prediction sets.  Indeed, the following proposition says that, for suitable $\alpha$, the $\alpha$-prediction set $\overline{\Pi}_x(\alpha)$, using the plausibility $\upi_x(y)$ in place of the probability $\pi_x(y)$, satisfies the prediction validity condition \eqref{eq:valid} with respect to any $\prob$ in $\probset$.

\begin{prop}
\label{prop:calibrated.pl}
Suppose that there exists $\alpha > 0$ such that 
\[ \prob\{\pi_X(Y) \leq \alpha\} \leq \alpha \]
for all $\prob \in \probset$, where $\pi_x$ is the $\prob$-conditional probability.  Then, for that $\alpha$ value, the plausibility-based $\alpha$-prediction set 
\[ \overline{\Pi}_x(\alpha) = \{y \in \YY: \upi_x(y) > \alpha\} \]
satisfies 
\[ \sup_{\prob \in \probset} \prob\{\overline{\Pi}_X(\alpha) \not\ni Y\} \leq \alpha. \]
\end{prop}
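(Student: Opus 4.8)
The plan is to mimic the one-line argument in the proof of Proposition~\ref{prop:stochastic.order}, reducing the validity statement to a statement about the random variable $\upi_X(Y)$, and then to exploit the single structural fact that distinguishes the plausibility from any individual conditional probability: the supremum in \eqref{eq:pi.sup} dominates every term it is taken over.

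First I would rewrite the error event. By the definition $\overline{\Pi}_x(\alpha) = \{y \in \YY: \upi_x(y) > \alpha\}$, its complement is $\{y \in \YY : \upi_x(y) \leq \alpha\}$, so the event $\{\overline{\Pi}_X(\alpha) \not\ni Y\}$ is exactly $\{\upi_X(Y) \leq \alpha\}$, just as in Proposition~\ref{prop:stochastic.order} the event $\{\Pi_X(\alpha) \not\ni Y\}$ equalled $\{\pi_X(Y) \leq \alpha\}$. Thus for each fixed $\prob \in \probset$,
\[ \prob\{\overline{\Pi}_X(\alpha) \not\ni Y\} = \prob\{\upi_X(Y) \leq \alpha\}. \]

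The key step is the pointwise domination $\upi_x(y) \geq \prob(Y=y \mid X=x) = \pi_x(y)$, which holds for every $\prob \in \probset$ and every $(x,y)$ precisely because $\upi_x(y)$ is defined in \eqref{eq:pi.sup} as the supremum of these conditional probabilities over $\probset$. Evaluating along the random pair $(X,Y)$ gives $\upi_X(Y) \geq \pi_X(Y)$, so the error event is contained in the corresponding single-model event,
\[ \{\upi_X(Y) \leq \alpha\} \subseteq \{\pi_X(Y) \leq \alpha\}. \]
Monotonicity of $\prob$ then yields $\prob\{\upi_X(Y) \leq \alpha\} \leq \prob\{\pi_X(Y) \leq \alpha\}$, and the hypothesis bounds the right-hand side by $\alpha$. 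Since $\prob$ was an arbitrary member of $\probset$ and the same $\alpha$ works uniformly, taking the supremum over $\prob \in \probset$ gives the claim.

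There is essentially no genuine obstacle here; the only point requiring care is notational, namely keeping track of the fact that $\pi_x$ in the hypothesis is the conditional probability attached to the particular $\prob$ currently under consideration, whereas $\upi_x$ is a single object fixed across all of $\probset$ that sits above each of these conditionals. Once the inequality $\upi_X(Y) \geq \pi_X(Y)$ is in hand, the argument reduces to the earlier validity computation applied uniformly in $\prob$, which is why replacing the probability $\pi_x$ by the plausibility $\upi_x$ can only shrink the error event and hence preserves validity.
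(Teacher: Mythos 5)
Your argument is correct and is exactly the intended one: the paper itself states Proposition~\ref{prop:calibrated.pl} without proof, and the natural argument is precisely what you give, namely that $\{\overline{\Pi}_X(\alpha)\not\ni Y\}=\{\upi_X(Y)\leq\alpha\}\subseteq\{\pi_X(Y)\leq\alpha\}$ because the supremum in \eqref{eq:pi.sup} dominates each $\prob$-conditional probability, after which the hypothesis applies uniformly in $\prob$. Nothing is missing.
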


\subsection{Illustrative example, cont.}
\label{SS:toy2}

To follow up on the example from Section~\ref{SS:toy1}, suppose now that, among the $n=1000$ polled individuals, 475 pledge support to Clinton, 425 pledged support to Trump, and 100 did not respond.  A naive approach to handling non-response is to assume that those who did not respond are no different, on average, than those who did respond; in statistics jargon, this corresponds to a {\em missing-at-random} assumption.  But based on the available information, there  is no justification for such a model assumption.  In fact, all available information indicates that these 100 individuals, in choosing not to respond to the poll, acted drastically differently from the other 900 who did respond, suggesting that their preferences may also exhibit different behavior.

Here, the added uncertainty is in which category---$\clinton$ or $\trump$---to put the 100 non-responses.  The missing-at-random model would split the 100 proportionally into the two groups, which boils down to ignoring them altogether.  Another model, extreme but consistent with the hypothesis of a ``hidden'' Trump vote would put all 100 non-responses with Trump; a similarly extreme model puts the 100 non-responses with Clinton, consistent with a ``hidden Clinton'' vote that was discussed by some \cite{WPost1106}.  There may be a whole spectrum of models in $\probset$, differing in how they handle the non-responses, but we have identified the two extremes, so we can evaluate the plausibilities by plugging the $\hat\theta$ corresponding to these extremes into the formula in Section~\ref{SS:toy1}.  That is, set 
\[ \hat\theta_{\text{lo}} = \frac{425 + 0}{1000} = 0.425 \quad \text{and} \quad \hat\theta_{\text{hi}} = \frac{425 + 100}{1000} = 0.525, \]
and evaluate the plausibilities (for $\lambda=10$)
\begin{align*}
\upi_x(\trump) & = \frac{\exp\{\lambda(\hat\theta_{\text{hi}} - \frac12)\}}{1 + \exp\{\lambda(\hat\theta_{\text{hi}} - \frac12)\}} = 0.562, \\
\upi_x(\clinton) & = 1 - \frac{\exp\{\lambda(\hat\theta_{\text{lo}} - \frac12)\}}{1 + \exp\{\lambda(\hat\theta_{\text{lo}} - \frac12)\}} = 0.679.
\end{align*}
This is equivalent to a pair of lower probabilities, 0.321 for $\trump$ and 0.438 for $\clinton$, along with an extra 0.241 of ``don't know.''  These results indicate that the information in the polling data is too weak to make a clear prediction about the winner.  Compare this to a naive analysis that ignores the non-response in assuming $\hat\theta=425/900=0.472$ and reports the pair of probabilities 
\[\pi_x(\trump)=\frac{\exp\{\lambda(0.472-0.50)\}}{1+\exp\{\lambda(0.472-0.50)\}}=0.430\quad\text{and}\quad\pi_x(\clinton)=1-\pi_x(\trump)=0.570.\]
This latter analysis gives an indication that Clinton has a semi-healthy lead, while ignoring that a critical and arguably unjustifiable missing-at-random assumption was made.

\section{Concluding remarks}\label{S:fail}

The field of data science and statistics has grown rapidly over the past decade in no small part because of the perceived success of data-driven methods in predicting the 2008 and 2012 presidential elections.  Writing for a general audience, Kass et al (2016) pay explicit acknowledgment to ``$\FTE$ for bringing statistics to the world (or at least to the media).''  But the same publicity which gained statistics and data science great fanfare after the 2008 and 2012 elections led to the public outcry after its perceived failure in the 2016 election.  Such is the risk that comes with being relevant, and statisticians ought to avoid the mistakes of the past when responding to these criticisms.

Statisticians are inclined to defend against these criticisms.  For example, the brazenness in the declaration on the {\em SimplyStatistics} blog that {\em Statistically speaking, Nate Silver, once again, got it right}, leaves many ordinary citizens scratching their heads: Silver was praised for correctly predicting the 2008 and 2012 elections; and now he's being praised for incorrectly predicting the 2016 election.  If Silver was correct in 2008 and 2012, when his predictions were right, and also correct in 2016, when his predictions were wrong, then is it possible for the analysis to ever be incorrect?  Though professional statisticians may dismiss such reasoning as naive, defending against these common sense refutations with jargon heavy arguments only arouses more suspicion.

To be clear, we are not claiming that those analyses were {\em wrong}, only that they were {\em ineffective}.  And by all accounts, the analyses {\em failed}.
They failed because the 2016 election was one of the most heavily covered news events in recent history, 
offering a major platform to publicize the power of data-driven techniques in addressing real world problems.
But that's OK.  Statisticians should not feel obligated to defend these failed predictions.  We should, instead---as urged by Jessica Utts \cite{utts}, 2016 President of the American Statistical Association---take the opportunity to clarify the purpose of statistical analyses in the first place.  Statistics is not a tool for predicting the future as much as it is of making sense of what insights can be reasonably gleaned from data.  Statistical analysis is conservative by its nature, and the limelight enjoyed by Nate Silver and other data scientists shouldn't distract from that.

As Yogi Berra famously said, ``It's tough to make predictions, especially about the future,''  
so even a perfectly sound analysis is not guaranteed to be right.  One of the things we can learn from the 2016 election is just how embarrassing incorrect predictions can be.  Since there is sure to be uncertainty in the 2020 election, we must be more careful about how we justify and communicate our conclusions to consumers.  In this paper, we have put forward some alternative ways to convert probabilities to valid predictions in a way that accounts for model uncertainties.  These are, by design, more conservative than traditional prediction strategies, especially when there is uncertainty about the model itself, but we show that it is  straightforward to arrange some bounds on the probability of making an erroneous prediction.

\bibliography{refs}
\bibliographystyle{abbrv}

\end{document}